\providecommand{\U}[1]{\protect\rule{.1in}{.1in}}
\newtheorem{propo}{Proposition}
\newtheorem{remark}{Remark}
\newtheorem{coro}{Corollary}
\newenvironment{proof}[1][Proof]{\noindent\textbf{#1.} }{\ \rule{0.5em}{0.5em}}
\def\la{\lambda}
\def\siia{\Updownarrow}
\begin{document}

\title{The Home Office in Times of COVID- 19 Pandemic and its impact in the  Labor Supply}
\author{Jos\'e Nilmar  Alves de Oliveira \\Processamento de Dados Amazonas S.A., \\Governo do Estado do Amazonas, Brasil
\and Jaime Orrillo\\Catholic University of Brasilia, Brazil
\and Franklin Gamboa\\Federal University of Mato Grosso, Brazil}
\maketitle

\begin{abstract}
We lightly modify Eriksson's (1996) model to accommodate the home office in a simple model of endogenous growth. By home office we mean any working activity carried out away from the workplace which is assumed to be fixed. Due to the strong mobility restrictions imposed on citizens during the COVID-19 pandemic, we allow the home office to be located at home. At the home office, however, in consequence of the fear and anxiety workers feel because of COVID-19, they become distracted and spend less time working. 

We show that in the long run, the intertemporal elasticity of substitution of the home- office labor is sufficiently small only if the intertemporal elasticity of substitution of the time spent on distracting activities is small enough also.

\hfill \break \vspace{ .5 mm }\break {\bf Keywords: } COVID-19;  Social Distancing; Home Office; Economic growth model.\\
{\bf JEL Codes}: O41.
\end{abstract}

\newpage

\section{Introduction}
The home office is  a modality of working  away from  a fixed job location
the office and therefore  can be  carried  out in any place different from a physical office. This modality of labor  has  existed  for a long time   and has been  mainly common in multinational enterprises\footnote{ Where some   workers like   managers needed   to get away from the office.}.   However, this modality of working  has increased over time around the world as shown in a survey carried out by Ipsos, see Fig. 1 below.

\begin{figure}[h]
\includegraphics[width=0.8\textwidth]{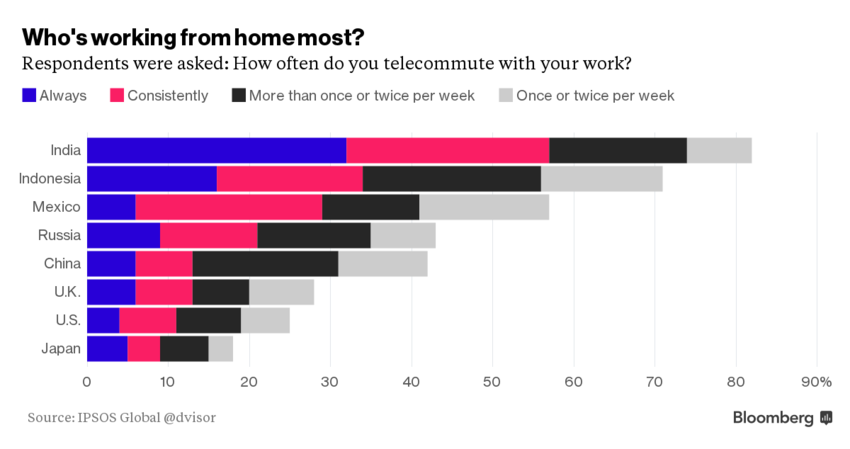}
\caption{ India, Indonesia and Mexico had the highest amount of telecommuters.}
\label{figura}
\end{figure}

In spite of the  significant gains from working from home in terms of worker productivity and satisfaction as shown in Bloom et al. (2015), it does not seem to be the case during   the COVID-19 pandemic, as  workers could engage in distracting activities\footnote{Due to the fear and the  anxiety  produced by the current pandemic.}   placing workers' productivity at risk. In this same vein of reasoning,    Dutcher and Jabs Saral (2012) highlight, even in normal times,  the difficulties that may arise if telecommuting workers are not properly monitored.

In spite of the fact that the home office or working from home\footnote{Working from  home  is  a very special kind of home office and it seems to be the rule in times of the current pandemic.} in the past it was only applied to specific  jobs,   nowadays,  because   the COVID-19 pandemic,  most firms seem to be obligated to adopt it  to  continue operating in their markets. Consequently, both the demand and supply for home offices seem to increase. Hence, it is the purpose  of this paper to  theoretically determine  the factors which influence  the  home-office job supply.

Of course not all jobs can be accomplished in a remote way \footnote{The  construction sector is one from them for instance.}  This makes the productive sector seek to reorganise itself so that working from home becomes the best alternative due to the strong mobility restrictions imposed by the authorities of countries.    It is also well known that many enterprises are planning on adopting home offices even after  the current pandemic. So much so, that in many countries some businesses, like hotels, are planning to adapt their spaces to offer them as places to set up home offices. This obliges us to make a long-run analysis of the home-office job supply.

To accomplish our purpose we consider a simple economic growth model with an endogenous labor supply like that of Eriksson (1996). Although our model is very similar to Eriksson?s model, we depart from it in the following aspect. We assume that the effort attached to human capital depends on the time spent on distracting activities, occurring during the working period. We assume that these distracting activities give some pleasure to workers, but we should not confuse such activities with leisure activities since the latter are supposed to occur outside normal working hours.

It is also useful to note that during the COVID-19 pandemic, workers are put on quarantine. It is therefore natural to assume that workers are more prone to be distracted by activities that decrease their home office production. We can interpret such distracting activities as being negative ``shocks'' to the labor supply. The word shock used here is not used in its rigorous sense of being stochastic since the economic growth model we used here is deterministic\footnote{ However, in our model,  the time  spent  on distracting activities is supposed to  change over time, reflecting the  intrinsic uncertainty imposed by the quarantine which makes workers choose, in a random way,  those  activities which result  in a decline of hours worked. }.

Our paper is related to the recent literature on the economic effects of the COVID-19 crisis. The papers in this literature have different objectives - from understanding its evolution to predicting its impacts on the world economy. Nonetheless, our paper is more related to the working-from-home literature which is surveyed by Allen et al. (2015). These authors address a type of home office, namely telecommuting and analyse how effective it is.

 Our objective in this paper is much more modest in the sense that we seek to discover how the preferences for consumption and displeasure for working affect the growth of home- office job supply. More precisely, we show that this growth rate is affected by the parameters that represent both the workers? preferences for consumption and the workers? displeasure for working. This is important for both firms and governments as it allows them to implement, in an optimal way, the incentives for home-office job supply by adjusting the goals of each policy maker. Our specific findings are related to the growth rate of variables along balanced-growth solution paths. We show that in the long run the intertemporal elasticity of substitution of home-office labor is sufficiently small only if the intertemporal elasticity of substitution of the time spent on distracting activities is small enough too.

The paper is organised as follows: Section 2 presents the model in which are described both households and firms. In this same section, the social planner?s problem is formulated. Section 3 presents the main results and the paper ends with a short section which analyses the theoretical results and gives some concluding remarks.

\section{The model} 

Our model is a centralised economy like that of Eriksson (1996) with some key modifications allowing us to make the labor supply endogenous via distracting activities. By distracting activities here I mean any activity which decreases labor time. They are not properly leisure but rather activities which produce both pleasure and affect or influence the acquisition of human capital which is placed in motion to produce consumption good/capital.

\begin{remark}
Distracting activities influence the effort which is allocated to produce either consumption goods or capital for future investments. Moreover, these activities also influence the effort used to acquire human capital.
\end{remark}

\subsection{Households} 
 We allow each worker, within  a constant population,  to supply labor $l_t.$    We assume that during  the working hours    each worker spends  time  $s_t$ in distracting activities  which give a certain pleasure.  It is useful to pointing  out that  the time $s_t$ is not leisure since it occurs during  working hours  $l_t.$ Thus, $l_s - s_t$ is the effective labor.

We model the instantaneous payoff of each worker to be 
$$V(c_t, s_t, l_t) = u(c_t) - v(l_t - s_t)\eqno(1)$$
This payoff consists of two parts: the former is the pleasure coming from consumption and the latter is the displeasure  felt while  working.

\subsection{Production} 
On the production side of the economy,     the output   is produced by using both capital, $k_t$  and effective labor $l_t-s_t$ which is potentiated by  the human capital  $h_t.$   We formalise this by assuming the effort depends on $s_t.$

Differently from   Uzawa  (1965) and Lucas (1988), we make  human capital endogenous by postulating  each  worker's effort  $e:[0, l_t]\to [0,1]$   depends on the time spent $s_t$ in the following way: $e(s_t) = 1 - \frac{s_t}{l_t}.$  We also  assume that  this effort is allocated both to the  production of  human capital and the consumption good/capital. Our allocation is  $(1 - e)$ for the former and $e$ for the latter.   More precisely, we assume that both  human and physical (or real) capitals  accumulate  according to 

$$\dot h_t = h_t( 1 - e(s_t))\eqno(1)$$
$$\dot k_t =  f(k_t, h_t e(s_t)(l_t-s_t))- c_t\eqno(2)$$ 

respectively.    Using the functional form of $e,$ we can rewrite  (1) and (2)  to be 
$$\dot h_t = h_t\frac{s_t}{l_t} \eqno(3)$$
$$\dot k_t =  f(k_t, \frac{h_t}{l_t}  (l_t-s_t)^2)- c_t\eqno(4)$$

\subsection{The planner}
After describing the consumption and production sides, we are going to formulate  the benevolent planner's problem.  The benevolent planner's problem is then  to chose  paths of consumption $c,$ labor $l,$ and   time spent $s$ on distracting activities    in order to  maximise the discounted stream  of payoffs by every identical agent in the economy 
$$\int_{0}^{\infty} e^{-\rho t}V(c_t,l_t, s_t)dt\eqno(5)$$
subject to   (3) and (4) with $k_o > 0$  and $h_o >$ given. 

Here, $c_t,l_t$ and $s_t$ are control variables and $\rho$ is the discount factor; and $k_t$ and $h_t$ are state variables.  In what follows we will drop all  indexes of time from variables which depend on the time  to attain analytical tractability.

In this model  the effort $e$ is endogenous,  not by itself, like  Eriksson (1996),  but  because  it depends on the time spent on distracting activities. However,   for the sake of comparison with related literature and mainly with that of   Eriksson's (1996) paper, we maintain,  like him,   that  the instantaneous  payoff  is  additively separable: 
$$V(c,s,l) = \frac{c^{1-\sigma}}{1-\sigma} - \frac{ (l-s)^{1+ \gamma}}{1 + \gamma},  1\not = \sigma > 0, \gamma >0,\eqno(6)$$
and the production function like a  Cobb-Douglas one :
$$f(k, he(l-s)) =  k^{\beta} [\frac{h}{l} (l-s)^2]^{1 - \beta}.\eqno(7)$$

\section{Theoretical results}
We begin this section by characterising the solutions of the benevolent planner?s problem.  For that, we consider  the current-value  Hamiltonian for the optimal problem, with ``prices''  $\lambda_1$  and $\lambda_2$  used to value increments to physical and human capital    respectively. 
$$H = \frac{c^{1-\sigma}}{1-\sigma} - \frac{ (l-s)^{1+ \gamma}}{1 + \gamma} + \lambda_1 [k^{\beta} [\frac{h}{l} (l-s)^2]^{1 - \beta}- c]  + \lambda_2 h\frac{s}{l}.$$

The  necessary  conditions for optimality  are:

\begin{enumerate}
\item $$H_c = 0  \Leftrightarrow  c^{-\sigma} = \lambda_1\eqno(8)$$
\item $$H_s = 0 \Leftrightarrow   (l-s)^{\gamma}  + (1-\beta) \lambda_1 k^{\beta}  [\frac{h}{l} (l-s)^2]^{-\beta} h(-2 +\frac{2s}{l})  + \lambda_2 \frac{h}{l}=0\eqno(9)$$
\item $$H_{l} = 0  \Leftrightarrow  -(l-s)^{\gamma}  +   \la_1 (1-\beta)  k^{\beta}  [\frac{h}{l} (l-s)^2]^{-\beta} h ( 1 -\frac{s^2}{l^2}) -\la_2h\frac{s}{l^2}= 0 \eqno(10)$$

\item 

$$\dot \lambda_1 = \rho \lambda_1 - H_k$$
$$\siia$$
$$\dot \lambda_1 = \lambda_1\{ \rho -  \beta k^{\beta -1} [\frac{h}{l} (l-s)^2]^{1-\beta} \}.\eqno(11)$$

\item $$\dot \lambda_2 = \rho \lambda_2 - H_h $$
$$\siia$$
$$\dot \lambda_2 =  \lambda_2 \rho -  \lambda_1 (1-\beta) k^{\beta}  [\frac{h}{l} (l-s)^2]^{-\beta}\frac{(l-s)^2}{l}  - \lambda_2 \frac{s}{l}\eqno(12)$$
\end{enumerate} 

(8) -(10 ) are the the first order conditions  to  to maximise $H,$ and    (11) and (12) give the rates of change  of  $\lambda_i, i=1,2$  of both capitals.

\subsection{Steady state path}

Next, we will  seek the balanced growth path from  (8)-(12)   which are solutions on which consumption and both kinds of capital are growing at constant percentage rates, the prices of the two kinds of capital are declining at constant rates.

In what follows we will  present our first  results which have to do with  the relationships between the rates of change of our variables $k, c, l-s$ and $h.$

\begin{propo}
Along the balanced growth solution of the planner's problem, we have 
$$\hat k = \hat c = \hat h + 2 \hat {(l-s)}-\hat l$$

\end{propo}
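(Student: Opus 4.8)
The plan is to read each ``hat'' as a logarithmic growth rate, $\hat x := \dot x / x$, and to exploit the defining feature of a balanced growth path, namely that all these growth rates are constant in time (consumption and both capitals grow at constant percentage rates and the shadow prices $\lambda_1,\lambda_2$ decline at constant rates). Under this reading the proposition splits into two independent claims: first that the ``composite input ratio'' hidden inside the marginal products is stationary, which will deliver $\hat k = \hat h + 2\,\hat{(l-s)} - \hat l$; and second that the consumption--capital ratio is stationary, which will deliver $\hat c = \hat k$.

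First I would extract the growth rate of $\lambda_1$. Dividing the costate equation (11) by $\lambda_1$ gives $\hat{\lambda_1} = \rho - \beta\, k^{\beta-1}[\frac{h}{l}(l-s)^2]^{1-\beta}$, and since the balanced-growth hypothesis makes $\hat{\lambda_1}$ constant (equivalently, by (8), $\hat{\lambda_1} = -\sigma \hat c$ with $\hat c$ constant), the term $k^{\beta-1}[\frac{h}{l}(l-s)^2]^{1-\beta}$ must itself be constant. Rewriting this term as $\left[\frac{h(l-s)^2}{l\,k}\right]^{1-\beta}$ and using $1-\beta \neq 0$, I conclude that $\frac{h(l-s)^2}{l\,k}$ is constant. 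A positive quantity is constant exactly when its logarithmic derivative vanishes, so $\hat h + 2\,\hat{(l-s)} - \hat l - \hat k = 0$, which is precisely $\hat k = \hat h + 2\,\hat{(l-s)} - \hat l$.

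Next I would turn to the capital accumulation equation (4). Dividing it by $k$ yields $\hat k = k^{\beta-1}[\frac{h}{l}(l-s)^2]^{1-\beta} - \frac{c}{k}$. The left-hand side is constant along a balanced growth path and the first term on the right was just shown to be constant, hence $c/k$ is constant; its logarithmic derivative then gives $\hat c - \hat k = 0$. Chaining the two conclusions produces $\hat k = \hat c = \hat h + 2\,\hat{(l-s)} - \hat l$.

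The computations here are routine manipulations of (8), (11), and (4), so the only real subtlety---and what I would treat most carefully---is the logical use of the balanced-growth definition: one must be explicit that ``constant growth rates'' is exactly the hypothesis that lets us pass from ``$\hat{\lambda_1}$ is constant'' to ``the marginal-product term is constant,'' and one must note that the extraction of the ratio requires $\beta \neq 1$ so that the exponent $1-\beta$ can be divided out. The remaining optimality conditions (9), (10), and (12) are not needed for this proposition; they will presumably serve to pin down the individual growth rates in the subsequent results.
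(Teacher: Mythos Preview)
Your argument is correct and follows essentially the same route as the paper: use (8) to identify $\hat\lambda_1=-\sigma\hat c$, combine with (11) to force the marginal-product term $k^{\beta-1}\bigl[\tfrac{h}{l}(l-s)^2\bigr]^{1-\beta}$ to be constant (yielding $\hat k=\hat h+2\,\hat{(l-s)}-\hat l$ after a log-derivative), and then divide (4) by $k$ to deduce that $c/k$ is constant, hence $\hat c=\hat k$. The only cosmetic differences are the order in which you derive the two equalities and your explicit remark that $\beta\neq 1$ is needed to cancel the exponent, which the paper leaves implicit.
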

\begin{proof}

From (8) we have $\hat \lambda_1 = - \sigma \hat c.$ So that if $\hat c = \theta,$ then from (11) we get that the marginal productivity of capital is constant. That is, 
$$ k^{\beta -1}  [\frac{h}{l} (l-s)^2]^{1-\beta}  = \frac{ \rho + \sigma \theta}{ \beta}.\eqno(13)$$

Dividing by $k$ through (4) and using (7), we have 
$$\frac{ c}{k} = k^{\beta -1} [\frac{h}{l} (l-s)^2]^{1-\beta} - \hat k \eqno(14)$$ 
Using (13)  and the fact that   $\hat k$ grows at  a constant rate,  by hypothesis,  one has $\frac{ c}{k} $ is a constant.
After Differentiating  (14) logarithmically with respect to time we get  
$$\hat k = \hat c = \theta.\eqno(15)$$

Differentiating  (13) logarithmically with respect to time  one has that the common growth rate of consumption and  capital      is 
 $$\theta = \hat h + 2 \hat {(l-s)}-\hat l\eqno(16)$$

\end{proof}

\begin{propo}
Under the same hypotheses of Proposition 1, one has:
\begin{enumerate}
\item $ \hat s = \hat l =\hat{ (l-s)}  = \frac{ 1 - \sigma}{ \gamma + \sigma} \hat h $
\item $\hat k = \hat  c =\frac{ 1 + \gamma}{\gamma + \sigma}\hat h$
\end{enumerate} 

\end{propo}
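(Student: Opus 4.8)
The plan is to reduce the two claims to a small linear system in the three constant growth rates $\hat h$, $\hat{(l-s)}$, and $\theta:=\hat c=\hat k$, using Proposition 1 together with the first-order conditions; the leverage comes from two observations, one dynamic and one static.

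First I would establish the equalities $\hat s=\hat l=\hat{(l-s)}$ before computing any constant. The law of motion (3) reads $\dot h = h\,\frac{s}{l}$, i.e. $\hat h = \frac{s}{l}$. Along a balanced-growth path $\hat h$ is constant by hypothesis, so the ratio $s/l$ is constant in time. Constancy of $s/l$ forces $\hat s=\hat l$ immediately, and writing $l-s=l\bigl(1-\tfrac{s}{l}\bigr)$ shows $\hat{(l-s)}=\hat l$ as well. Thus on the balanced path $\hat s=\hat l=\hat{(l-s)}$, and it only remains to identify this common rate and $\theta$ as multiples of $\hat h$.

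Second, I need one scalar relation tying $\theta$ to $\hat{(l-s)}$ and $\hat h$ coming from the static optimality conditions. Here I would combine (9) and (10) to eliminate the shadow price $\lambda_2$: after substituting $-2+\tfrac{2s}{l}=-\tfrac{2(l-s)}{l}$ and $1-\tfrac{s^2}{l^2}=\tfrac{(l-s)(l+s)}{l^2}$, adding the two equations makes the $(l-s)^\gamma$ terms cancel and collapses everything (dividing by the common factor $h\,\tfrac{l-s}{l^2}$) to $\lambda_2=(1-\beta)\lambda_1 k^{\beta}[\tfrac{h}{l}(l-s)^2]^{-\beta}(l-s)$; feeding this back into (9) then yields the clean multiplicative identity $(1-\beta)\lambda_1 k^{\beta}[\tfrac{h}{l}(l-s)^2]^{-\beta}h=l\,(l-s)^{\gamma-1}$. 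Taking the logarithmic time-derivative of this identity, and inserting $\hat\lambda_1=-\sigma\theta$ from (8) together with Proposition 1 (which gives $\hat k=\theta$ and $\hat h-\hat l+2\hat{(l-s)}=\theta$, so the two $\beta\theta$ contributions cancel), produces the relation $\hat h-\sigma\theta=\hat l+(\gamma-1)\hat{(l-s)}$.

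Finally I would solve. Specializing to $\hat l=\hat{(l-s)}$, the last relation becomes $\hat h-\sigma\theta=\gamma\,\hat{(l-s)}$, while Proposition 1 specializes to $\theta=\hat h+\hat{(l-s)}$. Eliminating $\theta$ gives $(1-\sigma)\hat h=(\gamma+\sigma)\hat{(l-s)}$, i.e. $\hat{(l-s)}=\frac{1-\sigma}{\gamma+\sigma}\hat h$, which is item 1; substituting back yields $\theta=\hat h+\hat{(l-s)}=\frac{\gamma+1}{\gamma+\sigma}\hat h$, which is item 2. I expect the only real obstacle to be the static step: spotting the cancellations that let (9) and (10) be collapsed to the single multiplicative identity free of $\lambda_2$. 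Once that identity is in hand, the rest is logarithmic differentiation and a $2\times2$ linear solve, and notably the costate equation (12) is not needed for either claim.
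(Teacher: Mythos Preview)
Your proof is correct and broadly parallels the paper's: both add (9) and (10) to obtain the key relation $\lambda_2=(1-\beta)\lambda_1 k^{\beta}\bigl[\tfrac{h}{l}(l-s)^2\bigr]^{-\beta}(l-s)$ (the paper's (17)), then substitute back into (9) and log-differentiate. The difference lies in how $\hat l=\hat{(l-s)}$ is obtained and in which shadow price is tracked. The paper keeps $\lambda_2$ in play, writing (19) as $(l-s)^\gamma=2\tfrac{h\lambda_2}{l}$, log-differentiates both (17) and (19) to get two equations in $\hat\lambda_2$, and appeals to the costate equation (12) together with (17) to conclude $\hat{(l-s)}=\hat l$. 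You instead read $\hat h=s/l$ directly from (3), so constancy of $\hat h$ alone forces $s/l$ constant and hence $\hat s=\hat l=\hat{(l-s)}$; and by re-expressing the back-substitution as an identity purely in $\lambda_1$, you need only one log-differentiation and never touch $\hat\lambda_2$ or (12). Your route is therefore slightly more elementary and confirms your closing remark that (12) is not needed for Proposition~2; the paper's route, on the other hand, yields (18) and (22) as by-products that it reuses in Proposition~3.
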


\begin{proof}

Summing (9) and (10) we  get 
$$(1-\beta) k^{\beta}  [\frac{h}{l} (l-s)^2]^{-\beta} \frac{\la_1}{\la_2} (l-s) = 1\eqno(17)$$

 Differentiating  (17) logarithmically with respect to time  and using (15) and (8) one has
$$\hat \la_2 = (1-\beta)\hat{(l-s)}  -\beta \hat h + (\beta - \sigma ) \theta\eqno(18)$$

Manipulating (9) and (17)  we have 
$$ (l-s)^{\gamma} = 2\frac{h\la_2}{l} \eqno(19)$$

  Differentiating  (19) logarithmically with respect to time  we have 
  $$\gamma  \hat{ (l-s)}  = \hat \la_2 + \hat h - \hat l \eqno(20)$$
  
 Manipulating (12) and using (17) we have $\hat{l-s} = \hat l.$ Using this fact,  (16) and  (20) become
  $$\theta = \hat h +  \hat {(l-s)} \eqno(21)$$
 $$(\gamma +1) \hat{ (l-s)}  = \hat \la_2 + \hat h  \eqno(22)$$
  respectively.

Putting (18) and (21) into (22) and  after arranging it  we have.

$$\hat{ (l-s)}  = \frac{ 1 - \sigma}{ \gamma + \sigma} \hat h \eqno(23)$$

Since $\hat h$ is constant we have that $\hat l = \hat s$ by (3).  Using the fact that $\hat{l-s} = \hat l,$  stablished above,   we get Item 1.   
Putting   (23) into (21) we get 
$$\theta = \hat h + \frac{ 1 - \sigma}{ \gamma + \sigma} \hat h = ( 1 +  \frac{ 1 - \sigma}{ \gamma + \sigma})\hat h\eqno(24)$$
Hence,  Item 2 follows. 

\end{proof}

The following proposition shows that  the growth rates in terms of parameters.

\begin{propo}
Under the same hypotheses of Proposition 1, one has:
\begin{enumerate}
\item $ \hat s = \hat l =\hat{ (l-s)}  =  \frac{(1 - \sigma)(\rho -1)}{1 - \sigma(2+\gamma)} $
\item $\hat k = \hat  c =\frac{(1 + \gamma)(\rho -1)}{1 - \sigma(2+\gamma)} $
\end{enumerate}

\end{propo}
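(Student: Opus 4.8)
The plan is to reduce the whole statement to the computation of the single growth rate $\hat h$ in closed form, because Proposition 2 already expresses every other growth rate as an explicit multiple of $\hat h$. The one piece of the optimality system that has so far been used only through its \emph{constancy}, and never through its actual value, is the costate equation (12) for the shadow price of human capital. So the crucial first step is to evaluate $\hat\lambda_2$ explicitly.

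Concretely, I would divide (12) through by $\lambda_2$ to obtain
$$\hat\lambda_2 = \rho - \frac{\lambda_1}{\lambda_2}(1-\beta)k^{\beta}\left[\frac{h}{l}(l-s)^2\right]^{-\beta}\frac{(l-s)^2}{l} - \frac{s}{l},$$
and then use (17) in the rearranged form $(1-\beta)k^{\beta}[\frac{h}{l}(l-s)^2]^{-\beta}\frac{\lambda_1}{\lambda_2}=\frac{1}{l-s}$ to collapse the middle term to $\frac{l-s}{l}$. The two surviving fractions recombine as $\frac{l-s}{l}+\frac{s}{l}=1$, so that
$$\hat\lambda_2 = \rho - 1.$$
I expect this to be the main obstacle: it is the only place where the constant $1$, and hence the factor $\rho-1$ appearing in the statement, is produced, and the whole simplification rests on recognizing that the marginal-product term and the $s/l$ term add up to exactly $1$ after substituting (17).

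With $\hat\lambda_2=\rho-1$ in hand the remainder is linear in $\hat h$. I would insert this value into equation (22), $(\gamma+1)\hat{(l-s)}=\hat\lambda_2+\hat h$, and then replace $\hat{(l-s)}$ by $\frac{1-\sigma}{\gamma+\sigma}\hat h$ using Item 1 of Proposition 2. This yields the scalar equation
$$\hat h\left[\frac{(\gamma+1)(1-\sigma)}{\gamma+\sigma}-1\right]=\rho-1,$$
whose bracket simplifies to $\frac{1-\sigma(2+\gamma)}{\gamma+\sigma}$, giving
$$\hat h = \frac{(\gamma+\sigma)(\rho-1)}{1-\sigma(2+\gamma)}.$$

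Finally I would substitute this expression for $\hat h$ back into the two formulas of Proposition 2. In Item 1, $\hat s=\hat l=\hat{(l-s)}=\frac{1-\sigma}{\gamma+\sigma}\hat h$, the factor $\gamma+\sigma$ cancels and leaves $\frac{(1-\sigma)(\rho-1)}{1-\sigma(2+\gamma)}$; in Item 2, $\hat k=\hat c=\frac{1+\gamma}{\gamma+\sigma}\hat h$, the same cancellation leaves $\frac{(1+\gamma)(\rho-1)}{1-\sigma(2+\gamma)}$. These are precisely the two claimed identities, so the argument is complete. Beyond the $\hat\lambda_2$ computation, the only care required is the routine simplification of the denominator to $1-\sigma(2+\gamma)$.
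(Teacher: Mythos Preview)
Your proposal is correct and follows essentially the same route as the paper: first compute $\hat\lambda_2=\rho-1$ by dividing (12) through by $\lambda_2$ and collapsing the marginal-product term via (17), then feed this into (22) and use the relations of Proposition~2 to solve for $\hat h$, and finally substitute back. The only cosmetic difference is that the paper passes through the intermediate variable $\theta$ (equations (21), (24), (27)) rather than substituting $\hat{(l-s)}=\frac{1-\sigma}{\gamma+\sigma}\hat h$ directly, but the algebra and the resulting expression for $\hat h$ are identical.
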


\begin{proof}

After manipulating  (12) and to use (17) and (3) we  reach to 
 
$$\hat \la_2 = \rho -1 \eqno(25)$$

Putting (25) into (22) one has
$$(1+ \gamma) \hat{ (l-s)}  =  \rho -1  + \hat h\eqno(26)$$
Putting (21) into (26) we have 
$$(1+\gamma) \theta = (2+\gamma)\hat h +\rho -1  \eqno(27)$$

Putting (24) into (27) and after simplifying the result we get
$$\hat h = \frac{(\gamma + \sigma)(\rho -1)}{1 - \sigma(2+\gamma)}\eqno(28)$$

Finally, Proposition 3 follows after substituting (28) into Items 1 and 2 of Proposition 2. 

\end{proof}

The following corollary shows the  growth rate of the output equals the  growth rate of the capital (or consumption).  

\begin{coro}
Under assumptions  of Proposition 3, one has 
$$\hat y = \hat k = \hat c$$
\end{coro}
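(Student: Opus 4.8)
The plan is to observe that output along the path is given by the Cobb--Douglas technology (7), namely $y = k^{\beta}[\frac{h}{l}(l-s)^2]^{1-\beta}$, and then to differentiate this expression logarithmically with respect to time and compare the result with the balanced-growth relation established in Proposition 1.

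First I would take logarithms of (7) to obtain
$$\log y = \beta \log k + (1-\beta)\left[\log h + 2\log(l-s) - \log l\right],$$
and differentiate with respect to time. This yields the growth-rate identity
$$\hat y = \beta \hat k + (1-\beta)\left[\hat h + 2\hat{(l-s)} - \hat l\right].$$
The whole argument then hinges on recognising the bracketed factor-bundle term.

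Next I would invoke Proposition 1, whose equation (16) asserts precisely that $\hat h + 2\hat{(l-s)} - \hat l = \theta = \hat k = \hat c$. Substituting this into the expression for $\hat y$ collapses the bracket to $\hat k$, so that the Cobb--Douglas weights recombine to unity:
$$\hat y = \beta \hat k + (1-\beta)\hat k = \hat k.$$
Since Proposition 1 already supplies $\hat k = \hat c$, the full chain $\hat y = \hat k = \hat c$ follows immediately.

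I do not anticipate any genuine obstacle: the corollary is essentially a restatement of the constant-returns structure of the technology combined with the balanced-growth condition of Proposition 1. The only step requiring mild care is verifying that the term $[\hat h + 2\hat{(l-s)} - \hat l]$ obtained by differentiating (7) is exactly the quantity appearing on the right-hand side of (16); once that identification is made, the weights $\beta$ and $1-\beta$ sum to one and the conclusion is forced.
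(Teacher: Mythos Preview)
Your proposal is correct and follows essentially the same route as the paper: logarithmic differentiation of (7) followed by substitution of the balanced-growth relation (16) to collapse the Cobb--Douglas weights. The paper additionally cites Item~2 of Proposition~3, but since the identity $\hat k=\hat c=\hat h+2\hat{(l-s)}-\hat l$ is already contained in Proposition~1, your reliance on Proposition~1 alone is entirely sufficient.
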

\begin{proof}
This result  follows from   Differentiating  (7) logarithmically with respect to time  and using (16) and  Item 2 of Proposition 3.    
\end{proof}

\subsection{Convergent utility and  transversality conditions} 
In  this section, we establish   the convergence of the utility function and    the transversality condition.  We will do it by considering the balanced path and  under the assumption that $\sigma > 1.$

First, the utility integral can be written as 
$$U = A_1 \int_{0}^{\infty} e^{  ( (1-\sigma)\hat c - \rho)t } dt -     A_2\int_{0}^{\infty} e^{ ( (1+ \gamma)\hat {(l-s)} - \rho )t } dt $$
where $A_1 = \frac{c_o^{1-\sigma}}{1- \sigma}$ and $A_1 = \frac{(l_o-s_o)^{1+\gamma}}{1+ \gamma}.$  Substituting the values of $\hat c$ and $\hat{l-s}$ given by Proposition 3 we have that $U$ is 
$$U = (A_1-A_2) \int_{0}^{\infty} e^{xt } dt  $$
where $x = \frac{(1+\gamma)( 1- \sigma) (\rho -1)  }{ 1 - \sigma(2 + \gamma)} - \rho$ is negative since $\sigma > 1$ and $\rho  < 1.$ Thus, $U$ is finite since $x$ is negative. 

Second, the transversality conditions  associated with  the  benevolent planner's problem  also hold. That is to say, 
$$\lim_{t\to \infty} k\la_1 e^{-\rho t} = 0 \hbox{ and } \lim_{t\to \infty} h\la_2e^{-\rho t} = 0$$
The former  follows from the  facts $\hat \la_1 = -\sigma \hat c$ and $\hat k = \hat c.$ Thus, one has 
$$k\la_1e^{-\rho t}  = k_o(\la_1)_o e^{( \hat k - \sigma \hat c)t } = k_o (\la_1)_0 e^{ [ (1-\sigma) \hat c -\rho] t}$$
Since $ (1-\sigma) \hat c -\rho = x$ and $x$ is negative from its definition (see above), the former result follows. 

Finally, the latter follows  from (25) and (28). To see it, it suffice to observe that

$$h\la_2 e^{-\rho t} = h_o(\la_2)_o e^{ (\hat h+ \hat \la_2 - \rho)t}$$

where  $ \hat h + \hat \la_1 -\rho$ equals $x.$  Since $x$  is negative,   the latter result follows.

\subsection{The intertemporal elasticity of  distracting activities} 
We start by setting the  elasticity  of the marginal utility  of time spent on   distracting activities, $V_2(c_t, s_t, l_t).$ Here  the sub-index  represents the  partial derivative with respect to $s_t.$   

By definition of elasticity one has 
$${\cal E}_V(s_t) =  \frac{\partial V_2}{ \partial s_t} \frac{ s_t}{ V_2}$$

Using (6) we compute  ${\cal E}_V(s_t).$  Thus, 
 $${\cal E}_V(s_t) =  \gamma \frac{ s_t}{l_t - s_t}$$
 Manipulating and using (3) we get

  $${\cal E}_V(s_t) =  \gamma \frac{  \hat h} { 1 - \hat h}$$
  Considering again the  balanced path  and using (28) we write  ${\cal E}_V(s_t)$ in terms of parameters 
 $${\cal E}_V(s_t)   = \gamma \frac{    \frac{(\gamma + \sigma)(\rho -1)}{1 - \sigma(2+\gamma)}    } {  1 -   \frac{(\gamma + \sigma)(\rho -1)}{1 - \sigma(2+\gamma)} } $$
 We know that the intertemporal elasticity  of substitution of time spent on   distracting activities $IES$  is defined as $\frac{1} { {\cal E}_V(s_t) }$ so that 
 $$IES (s) =  \frac{(1+\gamma) (1-\sigma) - \rho(\gamma + \sigma)}{ \gamma(\gamma + \sigma)(\rho-1)}\eqno(29)$$


For $\sigma > 1$  and $\rho < 1,$  we clearly have that $IES (s)$ tends to 0 as $\gamma \to \infty.$

\section{Analysis  of results and concluding remarks }

First, in relation to the home-office job supply represented by $l_t - s_t$  we know from  (6) that the intertemporal elasticity  of substitution of the home office  is $\frac{1}{\gamma}$ and   the intertemporal elasticity  of substitution of consumption is  $\frac{1}{\sigma}.$  Second, for the balanced   growth path satisfying  (8) - (12) to be a  solution of  the  benevolent planner's problem it  is sufficient  that the transversality conditions are satisfied. This  is achieved  by   assuming  $\sigma > 1.$   Third,  differently from  Eriksson's (1996) model we have considered the productivity of human capital sector as being 1 and the discount factor $\rho < 1.$  Lastly, if workers had been patients ( $\rho = 1$),  we would have considered   the productivity of the human capital sector   as being greater than 1 in order   to keep the results  similar to that of  Eriksson as is shown in Propositions 2, 3 and Corollary 1.

Using all the  results of the previous paragraph, we can then  say that  the intertemporal elasticity  of substitution  of time spent on distracting activities, $IES (s)$  is  small enough provided that the  intertemporal elasticity    of substitution of  home-office is  small enough. More precisely one has that 
$$\lim_{ \gamma \to \infty}  IES (s) = 0$$

The intuition behind this result is that if workers want to avoid fluctuations in home-office labor, they should display strong preference to avoid fluctuations on distracting activities. This result does not seem to be plausible in the short-run due to the high volatility of the distracting activities because of the COVID-19 pandemic that has spread  throughout the world provoking fear and anxiety in citizens and particularly in workers. However, to have $IES(s)$ small enough does seem to be quite plausible in the
 long-run since workers will end up incorporating home office work if it is adopted as a form of labor.

We finish this section by saying that although our paper is deterministic, it does explain to a certain degree, the long-run behaviour of the home-office job supply in terms of time spent on distracting activities. More precisely, a necessary condition for the home-office job supply to be smooth is that the intertemporal elasticity of substitution of distracting activities be small enough, as shown in the previous limit. We hope that in future research the home-office job supply will be analysed in ampler settings, including markets and government.

\newpage

\end{document}